\setlist{itemjoin ={,\enspace},itemjoin* = {, and\enspace}}
\newtheorem{theorem}{Theorem}
\newtheorem{definition}{Definition}
\renewcommand\hat\widehat
\renewcommand\geq\geqslant
\renewcommand\leq\leqslant
\renewcommand\bar\overline
\newcommand{%
     \scalebox{}{\input{}}  
}[2]{%
     \scalebox{#1}{\input{#2}}  
}
\newcommand{\fig}[1]{{Fig.~\ref{#1}}}
\newcommand{\bpara}[1]		{\smallskip \noindent {\bf #1}}
\def\frconv					{*_\theta}
\def\Z					{\mathbb Z}
\def\iZ					{\in \mathbb Z}
\def\iR					{\in \mathbb R}
\def\DE					{\stackrel{\rm{def}}{=}}
\def\l						{\left(}
\def\r						{\right)}
\newcommand\rob[1]			{\l #1 \r}
\newcommand\sqb[1]		{\left[#1\right]}
\def\eg					{\emph{e.g.~}}
\DeclareMathOperator{\sinc}{sinc}
\def\awgn               	{\epsilon}
\def\cov                    {$cov$}
\DeclareFontFamily{U}  			{MnSymbolA}{}
\DeclareFontFamily{U}  			{MnSymbolC}{}
\DeclareSymbolFont{MnSyA}         	{U}  {MnSymbolA}{m}{n}
\DeclareSymbolFont{MnSyC}         	{U}  {MnSymbolC}{m}{n}
\DeclareFontShape{U}{MnSymbolA}{m}{n}{
    <-6>  MnSymbolA5
   <6-7>  MnSymbolA6
   <7-8>  MnSymbolA7
   <8-9>  MnSymbolA8
   <9-10> MnSymbolA9
  <10-12> MnSymbolA10
  <12->   MnSymbolA12}{}
  \DeclareFontShape{U}{MnSymbolC}{m}{n}{
    <-6>  MnSymbolC5
   <6-7>  MnSymbolC6
   <7-8>  MnSymbolC7
   <8-9>  MnSymbolC8
   <9-10> MnSymbolC9
  <10-12> MnSymbolC10
  <12->   MnSymbolC12}{}
\DeclareMathSymbol{\hpup}{\mathrel}{MnSyA}{56}
\DeclareMathSymbol{\hpdn}{\mathrel}{MnSyA}{50}
\DeclareMathSymbol{\crc}{\mathrel}{MnSyC}{77}
\newcommand{\upo}[1]	{ {\overset{\lower0.5em\hbox{$\smash{\scriptstyle	\hpup}$}} 	{{#1}}}}
\newcommand{\dno}[1]	{ {\overset{\lower0.5em\hbox{$\smash{\scriptstyle  	\hpdn}$}} 	{{#1}}}}
\newcommand{\up}[2]	{ {\overset{\lower0.2em\hbox{$\smash{\scriptstyle	\hpup}$}} 	{{#1}}} \left( {#2} \right)}
\newcommand{\dn}[2]	{ {\overset{\lower0.2em\hbox{$\smash{\scriptstyle	\hpdn}$}} 	{{#1}}} \left( {#2} \right)}
\newcommand{\upd}[2]	{ {\overset{\lower0.2em\hbox{$\smash{\scriptstyle	\hpup}$}} 	{{#1}}} \sqb{#2}}
\newcommand{\dnd}[2]	{ {\overset{\lower0.2em\hbox{$\smash{\scriptstyle	\hpdn}$}} 	{{#1}}} \sqb{#2}}
\newcommand{\upm}[2]	{\xi_\theta \rob{#2} {#1}\rob{#2}}
\newcommand{\dnm}[2]	{\xi^*_\theta \rob{#2} {#1}\rob{#2}}
\newcommand{\parder}[2]{\frac{\partial #2}{\partial #1}}
\newcommand{\cjg}[1]{{#1}^H}
\newcommand{\upx}[1]	{\xi_\theta \rob{#1}}
\newcommand{\dnx}[1]	{\xi^*_\theta \rob{#1}}
\newcommand{\ply}[1]	{\mathsf{Q}\rob{#1}}
\newcommand{\mat}[1]	{\mathbf{#1}}
\def\yt				{y_\crc}
\def\yq				{y_\crc^\mathsf{Q}}
\def\yqm				{\mat{y}_\crc^\mathsf{Q}}
\def\cm				{\breve{\mat{c}}^\theta}
\def\yn				{z}
\def\cov		    {\mathsf{cov}}
\def\covT		    {\mathsf{cov}\rob{\boldsymbol{\Theta}}}
\def\JT		       {\mathsf{J}\rob{\boldsymbol{\Theta}}}
\def\gra            {\Phi}
\def\snr            {\mathsf{PSNR}}
\def\var            {\mathsf{var}}
\newcommand{\JTK}[1]{\mathsf{J}_{#1}^{\boldsymbol{\Theta}}}
\newcommand{\diff}[2]{\Delta^{#1}#2}
\newcommand{\frft}[2]	{\mathscr{F}^{\left( #1 \right)}_{#2}}
\newcommand{\frfts}[1]	{\widehat{ #1 }_\theta \rob{\omega}}
\newcommand{\EQc}[1]		{\stackrel{(\ref{#1})}{=}}
\def\frt					{FrFT\xspace}
\def\dfrt					{DFrFT\xspace}
\newcommand{\Sm}[1]		{\EuScript{S}_{#1}\rob{x}}
\newcommand{\Smd}[2]		{\EuScript{S}_{#1}^{\rob{#2}}\rob{x}}
\begin{document}

\title{Sparse Sampling in Fractional Fourier Domain: \\
Recovery Guarantees and Cram\'{e}r--Rao Bounds}

\author{V\'{a}clav Pavl\'{\i}\v{c}ek and Ayush Bhandari

\thanks{A.~Bhandari's work is supported by the UK Research and Innovation council's \emph{Future Leaders Fellowship} program ``Sensing Beyond Barriers'' (MRC Fellowship award no.~MR/S034897/1).}

\thanks{The authors are with the Dept. of Electrical and Electronic Engineering, Imperial College London, South Kensington, London SW7 2AZ, UK. (Emails: \texttt{vaclav.pavlicek20@imperial.ac.uk} and \texttt{ayush@alum.mit.edu}).}
\thanks{Manuscript submitted on January 2, 2024; accepted April 20, 2024. Date of the current version April 29, 2024. }
}

\markboth{Accepted with minor revisions.}
{Shell \MakeLowercase{\textit{et al.}}: Bare Demo of IEEEtran.cls for IEEE Journals}
\maketitle

\begin{abstract}
Sampling theory in fractional Fourier Transform (\frt) domain has been studied extensively in the last decades. 
This interest stems from the ability of the \frt to generalize the traditional Fourier Transform, broadening the traditional concept of bandwidth and accommodating a wider range of functions that may not be bandlimited in the Fourier sense.
Beyond bandlimited functions, sampling and recovery of sparse signals has also been studied in the \frt domain.
Existing methods for sparse recovery typically operate in the transform domain, capitalizing on the spectral features of spikes in the \frt domain.
Our paper contributes two new theoretical advancements in this area. 
First, we introduce a novel time-domain sparse recovery method that avoids the typical bottlenecks of transform domain methods, such as spectral leakage. This method is backed by a sparse sampling theorem applicable to arbitrary \frt-bandlimited kernels and is validated through a hardware experiment.
Second, we present Cram\'{e}r--Rao Bounds for the sparse sampling problem, addressing a gap in existing literature.
\end{abstract}

\begin{IEEEkeywords}
Annihilation, Cram\'{e}r--Rao Bounds, Fractional Fourier Transform, Sparse Sampling.
\end{IEEEkeywords}

\IEEEpeerreviewmaketitle
\bigskip
\bigskip

\bigskip
\bigskip
\tableofcontents

\newpage
\linespread{1.2}

\section{Introduction}

\IEEEPARstart{T}{he} early 20th century saw the emergence of the fractional Fourier transform (\frt) as a significant development in harmonic analysis, pioneered by the works of Wiener \cite{Wiener:1929:J} and Condon \cite{Condon:1937:J}. The \frt generalizes the traditional Fourier transform (FT) by introducing the concept of \emph{fractionalization} explained next. Consider a time-domain function $f(t)$ and its FT, $\mathscr{F}_f (\omega) = \widehat{f}(\omega)$. Let 
$\frft{n}{f} = \frft{n-1}{f}\circ \mathscr{F}_f,n\geq1$ 
define the recursive application of the FT operator, where $\circ$ denotes operator composition and $\frft{0}{f} = \mathrm{I}$ (identity). It can be verified via basic Fourier analysis that,
\[\arraycolsep=8pt
\begin{array}{*{20}{c}}
\underbrace{ \frft{1}{f} = \widehat{f} }_{n=1}
&
\underbrace{\frft{2}{f} = \widetilde{f}}_{n=2}
&
\underbrace{ \frft{3}{f} = \widetilde{\widehat{f}}}_{n=3}
&
\underbrace{\frft{4}{f} = \frft{0}{f}}_{n=4}
\end{array},\]
where $\widetilde{f}(x) = f(-x)$. Iterating the Fourier operator shows that $\{\frft{n}{f}\}_{n\geq1}$ is 4-periodic automorphism or the FT is \emph{cyclic on a group of four} \cite{Condon:1937:J}. This periodicity of the Fourier operator is visually depicted in \fig{fig:frft}.

\begin{figure}[!t]
\centering
     \scalebox{0.9}{\tikzset{every picture/.style={line width=0.75pt}} 

\begin{tikzpicture}[x=0.75pt,y=0.75pt,yscale=-1,xscale=1]

\draw  [color={rgb, 255:red, 0; green, 0; blue, 255 }  ,draw opacity=1 ][fill={rgb, 255:red, 155; green, 155; blue, 155 }  ,fill opacity=0.05 ][dash pattern={on 4.5pt off 4.5pt}][line width=0.75]  (167,159.5) .. controls (167,116.7) and (201.7,82) .. (244.5,82) .. controls (287.3,82) and (322,116.7) .. (322,159.5) .. controls (322,202.3) and (287.3,237) .. (244.5,237) .. controls (201.7,237) and (167,202.3) .. (167,159.5) -- cycle ;
\draw [color={rgb, 255:red, 255; green, 0; blue, 0 }  ,draw opacity=1 ][line width=1.5]    (244.5,158) -- (312,158)(244.5,161) -- (312,161) ;
\draw [shift={(322,159.5)}, rotate = 180] [fill={rgb, 255:red, 255; green, 0; blue, 0 }  ,fill opacity=1 ][line width=0.08]  [draw opacity=0] (11.61,-5.58) -- (0,0) -- (11.61,5.58) -- cycle    ;
\draw [color={rgb, 255:red, 255; green, 0; blue, 0 }  ,draw opacity=1 ][line width=1.5]    (244.5,161) -- (177,161)(244.5,158) -- (177,158) ;
\draw [shift={(167,159.5)}, rotate = 360] [fill={rgb, 255:red, 255; green, 0; blue, 0 }  ,fill opacity=1 ][line width=0.08]  [draw opacity=0] (11.61,-5.58) -- (0,0) -- (11.61,5.58) -- cycle    ;
\draw [color={rgb, 255:red, 255; green, 0; blue, 0 }  ,draw opacity=1 ][line width=1.5]    (246,159.5) -- (246,227)(243,159.5) -- (243,227) ;
\draw [shift={(244.5,237)}, rotate = 270] [fill={rgb, 255:red, 255; green, 0; blue, 0 }  ,fill opacity=1 ][line width=0.08]  [draw opacity=0] (11.61,-5.58) -- (0,0) -- (11.61,5.58) -- cycle    ;
\draw [color={rgb, 255:red, 255; green, 0; blue, 0 }  ,draw opacity=1 ][line width=1.5]    (243,159.5) -- (243,92)(246,159.5) -- (246,92) ;
\draw [shift={(244.5,82)}, rotate = 90] [fill={rgb, 255:red, 255; green, 0; blue, 0 }  ,fill opacity=1 ][line width=0.08]  [draw opacity=0] (11.61,-5.58) -- (0,0) -- (11.61,5.58) -- cycle    ;
\draw [color={rgb, 255:red, 0; green, 0; blue, 255 }  ,draw opacity=1 ][line width=1.5]    (244.5,159.5) -- (329,84) ;
\draw [shift={(244.5,159.5)}, rotate = 318.22] [color={rgb, 255:red, 0; green, 0; blue, 255 }  ,draw opacity=1 ][fill={rgb, 255:red, 0; green, 0; blue, 255 }  ,fill opacity=1 ][line width=1.5]      (0, 0) circle [x radius= 4.36, y radius= 4.36]   ;
\draw  [draw opacity=0][line width=1.5]  (279.47,127.06) .. controls (284.76,129.25) and (289.42,132.96) .. (292.73,138.06) .. controls (296.47,143.79) and (297.91,150.37) .. (297.3,156.79) -- (266.55,155.1) -- cycle ; \draw [color={rgb, 255:red, 0; green, 0; blue, 255 }  ,draw opacity=1 ][line width=1.5]    (283.09,128.85) .. controls (286.88,131.07) and (290.2,134.16) .. (292.73,138.06) .. controls (296.47,143.79) and (297.91,150.37) .. (297.3,156.79) ;  \draw [shift={(279.47,127.06)}, rotate = 34.24] [fill={rgb, 255:red, 0; green, 0; blue, 255 }  ,fill opacity=1 ][line width=0.08]  [draw opacity=0] (8.13,-3.9) -- (0,0) -- (8.13,3.9) -- cycle    ;

\draw (276.23,146.2) node  [font=\large,color={rgb, 255:red, 0; green, 0; blue, 255 }  ,opacity=1 ]  {$\theta $};
\draw (370,72) node  [font=\large,color={rgb, 255:red, 0; green, 0; blue, 255 }  ,opacity=1 ]  {$\underbrace{\mathscr{F}_{f}^{( \theta )} =\hat{f}_{\theta }( \omega )}_{\mathsf{FrFT\ Domain}}$};
\draw (375.5,160) node  [font=\large]  {$\mathscr{F}_{f}^{( 0)} =f( t)$};
\draw (106.5,160) node  [font=\large]  {$\mathscr{F}_{f}^{( 2)} =f( -t)$};
\draw (246.5,257) node  [font=\large]  {$\mathscr{F}_{f}^{( 3)} =\hat{f}( -\omega )$};
\draw (241.5,66) node  [font=\large]  {$\mathscr{F}_{f}^{( 1)} =\hat{f}( \omega )$};
\draw (395,184) node  [font=\large]  {$=\mathscr{F}_{f}^{( 4)}$};
\draw (275,173.45) node  [font=\small] [align=left] {\textsf{Time}};
\draw (230.5,124) node  [font=\small,rotate=-270] [align=left] {\textsf{Frequency}};

\end{tikzpicture}}  

\caption{Visual illustration of ``fractionalizing'' the Fourier transform (FT) which introduces the \frt for non-integer orders, $\theta\in\mathbb{R}$. While the traditional FT is a 4-periodic automorphism \cite{Condon:1937:J} (applying FT 4 times returns the original function), this property holds only for integer orders, $n\in\mathbb{Z}$. The \frt generalizes this concept to arbitrary real orders $\theta$. When $\theta=\pi/2$, the \frt simplifies back to the FT.}
\label{fig:frft}
 \end{figure}
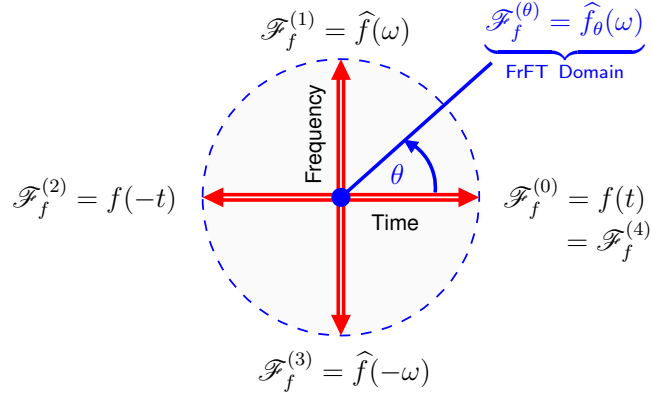

Given this cyclical nature of the FT operator, an intriguing question then is how to define a Fourier-like mapping on a continuous circle. Condon's work \cite{Condon:1937:J} provides the answer, introducing an integral transformation that applies to any arbitrary point $\theta\in\mathbb{R}$ on the circle. This leads to the mathematical definition of the \frt which is defined as a mapping, 
\begin{equation}
\label{eq:frft}
\frft{\theta}{f}: f \to \widehat{f}_\theta, 
\qquad
\frfts{f} = \int {f \rob{t} \kappa_\theta^\ast\left( {t,\omega } \right)dt},
\end{equation}
where the \frt kernel parameterized by $\theta\iR$ is written as, 
\begin{equation}
\label{kernel}
{\kappa _\theta }\left( {t,\omega } \right) \DE 
\begin{cases}
K_\theta{e^{ - \jmath \left( {\frac{{{t^2} + {\omega ^2}}}{2}\cot \theta  - \frac{{\omega t}}{{\sin \theta }}} \right)}}
& \theta \not = n\pi \\
\delta \left( {t - \omega } \right) & \theta = 2n\pi \\
\delta \left( {t + \omega } \right) & \theta+\pi = 2n\pi, \\
\end{cases}
\end{equation}
and $K_\theta$ is a known constant. The inverse-FrFT is the \frt with $-\theta$, which can be interpreted as rotation in $t$--$\omega$ plane (see \fig{fig:frft}). Clearly, at $\theta = \pi/2$, \frt simplifies to the FT.

Since its inception, the \frt has been widely applied in various fields. 
For comprehensive coverage, the reference book \cite{Ozaktas:2001:Book} by Ozaktas \& co-workers is an essential resource.
Notably, the \frt offers advantages in digital communications \cite{Martone:2001:J} and radar systems \cite{Yetik:2003:J}, where chirp-like basis functions (see \eqref{kernel}) are more effective than traditional sinusoidal basis (Fourier). Other areas of application include quantum mechanics  \cite{Namias:1961:J}, harmonic analysis \cite{Zayed:2002:J, Zayed:1996:J}, time-frequency representations \cite{Almeida:1994:J,Pei:2007:J}, and optics and imaging \cite{Amein:2007:C}. 

In the signal processing and harmonic analysis fields, a fundamental interest lies in understanding the connection between continuous and discrete representations through sampling theory. In the context of the \frt, key focus has been on the following two fronts.

\renewcommand\descriptionlabel[1]{\hspace{-0pt}{\raisebox{-2pt}\FilledDiamondShadowC} \hspace{0.5pt}\bf{#1}}

\begin{description}[leftmargin = 0pt,itemsep = 4pt,labelwidth=!,labelsep = !]
\item{\bf Shannon's Sampling.} This involves understanding how to represent bandlimited functions in the \frt domain as point-wise samples, considering that these functions are typically non-bandlimited in the FT domain. Several proofs of the \frt sampling theorem have been presented \cite{Xia:1996:J,Torres:2006:J,Tao:2008:Ja,Candan:2003:J,Garcia:2000:J,Wei:2010:J,Erseghe:1999:J,Bhandari:2012:J,Bhandari:2019:J}. Additionally, sampling theory has expanded beyond bandlimited spaces to include sparse signals \cite{Bhandari:2010:J,Bhandari:2015:C,Bhandari:2019:Ja} and shift-invariant spaces \cite{Bhandari:2012:J,Bhandari:2019:J}, and more recently to the Unlimited Sensing Framework \cite{Zhang:2023:C,Bhandari:2020:J,Bhandari:2017:C}. The main result in the area is as follows. 

\begin{theorem}[Sampling Theorem for \frt Domain \cite{Xia:1996:J,Torres:2006:J,Tao:2008:Ja,Candan:2003:J,Garcia:2000:J,Wei:2010:J,Erseghe:1999:J,Bhandari:2012:J,Bhandari:2019:J}]
\label{thm:frt}
If the function $f(t)$ contains no frequencies higher than $\Omega$ in the fractional Fourier domain, then $f(t)$ is completely determined by its equidistant samples spaced $T \leq \rob{\pi \sin \theta} /\Omega$. 
\end{theorem}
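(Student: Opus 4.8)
The plan is to reduce the fractional statement to the classical Shannon sampling theorem by exploiting the chirp--Fourier--chirp factorization of the \frt kernel in \eqref{kernel}. For $\theta \neq n\pi$, the exponent separates into a pure coupling term $\omega t/\sin\theta$ together with two quadratic (chirp) terms $t^2\cot\theta/2$ and $\omega^2\cot\theta/2$. Substituting $\kappa_\theta^\ast$ into the definition \eqref{eq:frft} and pulling the $\omega$-dependent chirp outside the integral while absorbing the $t$-dependent chirp into the signal, I would obtain
\[
\frfts{f} = K_\theta^\ast\, e^{\jmath\frac{\omega^2}{2}\cot\theta} \int g(t)\, e^{-\jmath \frac{\omega t}{\sin\theta}}\, dt, \qquad g(t) \DE f(t)\, e^{\jmath \frac{t^2}{2}\cot\theta}.
\]
The remaining integral is, up to the frequency rescaling $\omega \mapsto \omega/\sin\theta$, exactly the ordinary Fourier transform of the chirp-modulated signal $g$, i.e. it equals $\widehat{g}\rob{\omega/\sin\theta}$ in the standard convention.

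First I would make precise the equivalence of bandlimitedness. The hypothesis that $f$ contains no fractional frequencies above $\Omega$ means $\frfts{f} = 0$ for $|\omega| > \Omega$. Since the leading chirp $e^{\jmath\omega^2\cot\theta/2}$ and the constant $K_\theta^\ast$ never vanish, this is equivalent to $\widehat{g}(\xi) = 0$ for $|\xi| > \Omega/|\sin\theta|$; that is, $g$ is bandlimited in the classical Fourier sense with bandwidth $\Omega/|\sin\theta|$. Applying the classical sampling theorem to $g$ then shows that $g$ is completely determined by its equidistant samples spaced $T \leq \pi|\sin\theta|/\Omega$, via the cardinal series $g(t) = \sum_n g(nT)\,\sinc\!\rob{(t-nT)/T}$.

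Next I would transfer this recovery from $g$ back to $f$. Because the chirp factor $c(t) = e^{\jmath t^2\cot\theta/2}$ is a fixed, known, everywhere-nonzero function, the sample $g(nT) = f(nT)\,c(nT)$ is computable from $f(nT)$ and conversely, so $f \mapsto g$ is a pointwise-invertible modulation and the sample sets $\{f(nT)\}$ and $\{g(nT)\}$ carry identical information. Reconstruction of $f$ then follows by demodulation of the cardinal series,
\[
f(t) = e^{-\jmath \frac{t^2}{2}\cot\theta} \sum_{n} f(nT)\, e^{\jmath \frac{(nT)^2}{2}\cot\theta}\, \sinc\!\rob{\frac{t - nT}{T}},
\]
which establishes that $f$ is completely determined by $\{f(nT)\}$ whenever $T \leq \pi|\sin\theta|/\Omega$. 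The degenerate cases $\theta = n\pi$, where $\sin\theta = 0$ and the kernel collapses to a shifted delta, I would dispatch separately, since there the \frt reduces to identity or reflection and fractional bandlimitedness becomes a trivial support condition.

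I expect no deep obstacle, as the argument is inherited verbatim from the classical theorem through an invertible modulation; the one step that must be handled with care is the bandwidth rescaling by $\sin\theta$ induced by the $\omega t/\sin\theta$ coupling. It is precisely this rescaling that produces the factor $\pi\sin\theta/\Omega$ in the Nyquist interval and that specializes correctly to the Fourier case at $\theta = \pi/2$, where $\sin\theta = 1$. Getting the direction and magnitude of this scaling right, and tracking it consistently through the equivalence of bandlimitedness and the interpolation formula, is the main point to verify.
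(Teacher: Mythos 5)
Your proof is correct and follows the standard chirp-modulation reduction to Shannon's theorem: the paper states Theorem~\ref{thm:frt} without proof (citing the literature), but the interpolation formula it displays immediately after the statement, $f(t) = e^{-\jmath \frac{t^2}{2}\cot\theta}\sum_{n} f(nT)\, e^{\jmath \frac{(nT)^2}{2}\cot\theta} \operatorname{sinc}\left(\tfrac{t}{T}-n\right)$, is exactly the demodulated cardinal series your argument produces, so your route coincides with the intended one. The only refinement you add is writing $\pi\left|\sin\theta\right|/\Omega$ where the paper implicitly assumes $\sin\theta > 0$, which is a harmless (indeed slightly more careful) strengthening.
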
 

Note that, at $\theta = \pi/2$, Theorem~\ref{thm:frt} reduces to the Shannon's sampling theorem for the Fourier domain. Recovery of the continuous-time function from samples $\{ f\rob{nT} \}_{n\in\Z}$ is facilitated by the \frt interpolation formula, 
\begin{equation*}
\label{eq:frftrec}
f\left( t \right) = {e^{ - \jmath \frac{{{t^2}}}{2}\cot \theta }}\sum\limits_{{n} \in \mathbb{Z}} {f\left( {{n}T} \right){e^{\jmath \tfrac{{{{\left( {{n}T} \right)}^2}}}{2}\cot \theta }}} \operatorname{sinc} \left( {\frac{t}{T} - {n}} \right).
\end{equation*}

\item{\bf Discretization of Continuous Transforms via Sampling.} Here, the focus has been to achieve discrete representations for the continuous transform. Several papers (see \cite{Pei:1999:Ja,Candan:2000:J,Liu:2014:J}, and references therein) have worked towards a definition of the discrete fractional Fourier transform (\dfrt), extending the discrete Fourier transform (DFT) in a manner analogous to how the continuous \frt expands upon the continuous FT. 
\end{description}
 
\bpara{Motivation and Contributions.} 
Sampling theory in the \frt domain is a prominent research area, as evidenced by the density of papers on this theme \cite{Xia:1996:J,Torres:2006:J,Tao:2008:Ja,Candan:2003:J,Garcia:2000:J,Wei:2010:J,Erseghe:1999:J,Bhandari:2012:J,Bhandari:2019:J}. Within this framework, the sampling and reconstruction of \emph{sparse signals} \cite{Bhandari:2010:J,Bhandari:2015:C,Bhandari:2019:Ja} has been studied in a variety of contexts due to the pervasive presence of such signals \cite{DeFigueiredo:1982:J,Vetterli:2002:J} in diverse application areas. The generalization of sparse recovery methods beyond Fourier domain \cite{Bhandari:2019:Ja,Liu:2014:J} is an interesting study in its own right. 
Current methods \cite{Bhandari:2010:J,Bhandari:2015:C,Bhandari:2019:Ja} suffer from two research gaps.
\begin{enumerate}[label  = $\bullet$ ]
  \item Sparse recovery methods rely on transform domain representations, where an unknown sparse signal corresponds to a sum-of-complex exponentials in \dfrt representation \cite{Bhandari:2010:J,Bhandari:2015:C,Bhandari:2019:Ja}, with recovery depending on spectral estimation methods  \cite{DeFigueiredo:1982:J,Vetterli:2002:J}. However, this approach is limited due to spectral leakage in discrete transforms like \dfrt \cite{Pei:1999:Ja,Candan:2000:J,Liu:2014:J} and DFT, especially in sub-Nyquist sampling scenarios. A  time-domain method would not suffer with such limitations. 
  \item Current works \cite{Bhandari:2010:J,Bhandari:2015:C,Bhandari:2019:Ja} are focused on sampling criterion with ideal measurements. Study of performance gurantees, a key tool for benchmarking algorithmic performance in the presence of noise, has not been discussed in the literature.
\end{enumerate}

This paper covers the theoretical aspects of the above research gaps. Our main contributions are as follows: 
\begin{enumerate}[ label  = $\bullet$ ]
\item We present a novel, time-domain sparse sampling approach that does not require any \dfrt operations and hence, its side-effects \eg spectral leakage, are eliminated. Our time-domain recovery approach is backed by a sampling theorem. 
\item We derive Cram\'{e}r--Rao Bounds (CRB) for the sparse parameter estimation problem. This serves as a performance guarantee for the recovery problem in the presence of noise. 
\end{enumerate}

\section{Mathematical Preliminaries}
Leading up to our main results, we will first revisit the basic mathematical operations associated with the FrFT domain.

\begin{definition}[Chirp Modulation] For a given \frt order $\theta\iR$, we define the quadratic phase, chirp modulation function by ${\xi _\theta }\left( t \right) = {\exp\rob{\jmath \frac{{\cot \theta }}{2}{t^2}}}$. For any function $f\rob{t}$ we define up-chirp and down-chirp operations, respectively, by 
\begin{equation}
\label{eq:chpmod}
\up{f}{t} \DE \upm{f}{t} \quad \mbox{ and } \quad \dn{f}{t} \DE \dnm{f}{t}.
\end{equation}
\end{definition}

When working with low-pass filters, we will be using the fractional convolution operator introduced by Zayed \cite{Zayed:1998:J}.

\begin{definition}[Fractional Convolution Operator \cite{Zayed:1998:J}] 
Given functions $f$ and $g$, let $*$ denote the conventional convolution operator, that is 
$\left( {f*g} \right)\left( t \right) = \int {f\left( x \right)g\left( {t - x} \right)dx}$. Then, the \frt convolution operator denoted by $\frconv$, is defined as
\begin{equation}
\label{eq:frc}
h\rob{t} = \rob{f \frconv g}\rob{t} \DE C_\theta \dnx{t} \rob{\upo{f}*\upo{g}}\rob{t},
\end{equation}
where ${C_\theta } = \sqrt {\left( {1 - \jmath \cot \theta } \right)/\left( {2\pi } \right)}$.
\end{definition}
The $\frconv$ operator \eqref{eq:frc} admits convolution-multiplication duality. Given  $h = f \frconv g$ in the time domain yields $\frfts{h} = \dnx{\omega} \frfts{f}\frfts{g}$ in \frt domain  \cite{Zayed:1998:J}. In our work, for simplicity of exposition, we will omit $C_\theta$ in our computations. 

\bpara{Signal Model and Measurements.} The unknown, continuous-time $K$-sparse signal is defined as,
\begin{equation}
\label{eq:ksp}
    s_K(t) = \sum\limits_{k=0}^{K-1} c_k \delta \left ( t - t_k \right ), \quad \sum\nolimits_k |c_k| <\infty.
\end{equation}
We will follow the measurement setup in \cite{Bhandari:2010:J,Bhandari:2015:C,Bhandari:2019:Ja} where one observes low-pass filtered version of the spikes in \eqref{eq:ksp} in the \frt sense. This is consistent with the conventional sparse sampling and super-resolution problems \cite{DeFigueiredo:1982:J,Vetterli:2002:J} in the Fourier domain. Concretely, we will assume that the sampling kernel is $\phi_M\rob{t} = \dno{\psi}_M(t)$ is bandlimited in the \frt domain. We will be working with \emph{arbitrary kernels} of the form, $\psi_M (t) = \sum\nolimits_{m = 0}^{M - 1} {{p_m}\operatorname{sinc} \left( {\frac{t}{T} - m} \right)} $ where the weights $\{p_m\}_{m=0}^{M-1}$ are assumed to be known. The low-pass filtered measurements simplify to, 
\begin{align}
\label{eq:yt}
y\rob{t} &= \rob{s_K \frconv \phi_M}\rob{t}  \\
&\EQc{eq:frc} \dnx{t} 
\sum\limits_{k=0}^{K-1} c_k
\int_{-\infty}^{\infty} \upx{\tau}  {\delta(\tau - t_k) \phi_M(t - \tau)} d\tau \notag \\
& = \dnx{t} \sum\limits_{k=0}^{K-1} c_k \upx{t_k} \sum\limits_{m=0}^{M-1} p_m \sinc \left ( \frac{t-t_k}{T} - m\right ). \notag
\end{align}

\bpara{Problem Statement.}
Given $N$ samples, $y\sqb{n} = y\rob{nT}, T>0$, our goal is to estimate the unknown sparse signal $s_K\rob{t}$.

\section{Recovery Algorithm and Guarantees}
Our sparse recovery result is as follows.
\begin{theorem}[Sampling Criterion]
\label{thm:samp-crit}
Let $y\rob{t} = \rob{s_K \frconv \phi_M}\rob{t}$, as in \eqref{eq:yt} and $\phi_M,\theta$ be known. Suppose we are given $N$ samples of $y\sqb{n} = y\rob{nT}, T>0$, then, $N\geq 2KM$ guarantees recovery of the unknown signal $s_K\rob{t}$ in \eqref{eq:ksp}.
\end{theorem}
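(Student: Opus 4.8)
The plan is to strip away the fractional phase, convert the kernel's convolution into a multiplication in a finite frequency domain, and then invoke the annihilating-filter (Prony) machinery for a sum of exponentials; the factor $M$ in the bound will come from a pigeonhole argument on the zeros of the kernel. Concretely, I would first \emph{de-chirp} the measurements: multiplying the samples $y[n]=y(nT)$ by the up-chirp $\xi_\theta(nT)$ cancels the common factor $\dnx{t}$ appearing in \eqref{eq:yt}, giving
\[
z[n] \DE \xi_\theta(nT)\,y[n] = \sum_{k=0}^{K-1} a_k \sum_{m=0}^{M-1} p_m \sinc\!\rob{n - u_k - m},\qquad a_k \DE c_k\,\xi_\theta(t_k),\ \ u_k\DE t_k/T .
\]
This removes all $\theta$-dependence and reduces the FrFT problem to an ordinary sparse-sinc sampling problem: recovering the $K$ shifts $\{u_k\}$ (equivalently $\{t_k\}$) and the $K$ weights $\{a_k\}$ (equivalently $\{c_k\}$) from $z[n]$.

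Next I would expose the spectral structure. Writing the inner sum as a discrete convolution, $z = p * w$ with $w[n]=\sum_k a_k\sinc(n-u_k)$, and using that the (periodized) $\sinc$ is the ideal interpolation kernel whose DFT is a box, the $N$-point DFT factorizes as
\[
\widehat z_\ell = P_\ell\,W_\ell,\qquad P_\ell=\sum_{m=0}^{M-1}p_m\,e^{-\jmath 2\pi m\ell/N},\qquad W_\ell=\sum_{k=0}^{K-1}a_k\,e^{-\jmath 2\pi u_k\ell/N}.
\]
Here $P_\ell$ is \emph{known} (the $p_m$ are known), while $W_\ell$ is a sum of $K$ complex exponentials in $\ell$, i.e.\ a power-sum series amenable to Prony's method.

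The count $N\geq 2KM$ then enters through the zeros of $P_\ell$. Since $P_\ell$ is the evaluation of a polynomial of degree $M-1$ in $e^{-\jmath 2\pi\ell/N}$, at most $M-1$ of the $N$ bins vanish. Among the $N\geq 2KM$ circular bins I can select $M$ disjoint blocks of $2K$ consecutive indices, so the pigeonhole principle forces at least one block to be free of the $\leq M-1$ zeros; on that block I recover $W_\ell=\widehat z_\ell/P_\ell$ at $2K$ consecutive frequencies. From these $2K$ consecutive values I would build the $K\times K$ Toeplitz annihilating system, whose monic solution has roots $e^{-\jmath 2\pi u_k/N}$: this yields the $u_k$ (hence $t_k=u_k T$, uniquely as long as the spikes lie in the observation window), after which the linear, full-rank Vandermonde system in $W_\ell$ returns the $a_k$ and finally $c_k=a_k\,\xi_\theta^*(t_k)$.

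I expect the \emph{main obstacle} to be making the factorization $\widehat z_\ell=P_\ell W_\ell$ exact: the sincs have infinite support and non-integer shifts $u_k$, so one must justify the periodic/bandlimited idealization under which the DFT of the shifted $\sinc$ is a clean phase ramp with no aliasing. The second delicate point --- and the one that actually pins the bound at $2KM$ rather than the naive $2K$ --- is the guarantee of a zero-free block of length $2K$, which is precisely what the pigeonhole argument over the $\leq M-1$ zeros of $P_\ell$ supplies.
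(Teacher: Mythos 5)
Your opening move (de-chirping by $\xi_\theta(nT)$) coincides with the paper's first step, but from there you take the transform-domain route that this paper is explicitly constructed to avoid, and the step you yourself flag as ``the main obstacle'' is a genuine gap, not a removable technicality. The factorization $\widehat z_\ell = P_\ell W_\ell$ is exact only for the DTFT taken over all of $\Z$: the identity $\sum_{n\in\Z}\sinc\rob{n-u}\,e^{-\jmath\omega n}=e^{-\jmath\omega u}$, $|\omega|\leq\pi$, requires summation over \emph{every} integer. With a finite record $n=0,\dots,N-1$ of a sequence whose constituent sincs have infinite support and non-integer shifts $u_k$, the $N$-point DFT does not factor; every bin carries a truncation (leakage) term, so $\widehat z_\ell/P_\ell\neq W_\ell$ and the Toeplitz/Prony system you build is only approximately annihilated --- exact recovery is not established for any finite $N$. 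No ``periodic/bandlimited idealization'' can close this, because the model \eqref{eq:ksp} is a finite, aperiodic stream of Diracs; exact DFT factorization of this kind holds in the classical FRI setting only when the signal is periodic and the kernel is a Dirichlet (periodized sinc), which is not the hypothesis of Theorem~\ref{thm:samp-crit}. Your pigeonhole argument is fine as far as it goes (a degree-$(M-1)$ polynomial has at most $M-1$ zeros among the $N$ roots of unity, so among $M$ disjoint blocks of $2K$ bins one is zero-free), but it sits on top of a factorization that does not hold, so the bound $N\geq 2KM$ is never actually secured.

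The paper's proof stays entirely in the time domain, which is precisely how it sidesteps this leakage issue (and is the point of the contribution). After the same de-chirping, it uses $\sinc\rob{n-\bar{t}_k-m}=(-1)^{n-m+1}\sin\rob{\pi\bar{t}_k}/\rob{\pi\rob{n-m-\bar{t}_k}}$ to rewrite the rescaled samples \eqref{eq:z_n_definition} as a sum of $KM$ simple poles in the variable $n$; multiplying by the degree-$KM$ polynomial $\ply{n}$ of \eqref{eq:polyKM}, whose roots are $\{m+\bar{t}_k\}$, clears every denominator and leaves a polynomial in $n$ of degree $KM-1$, which the finite-difference operator $\diffop{KM}$ annihilates \emph{exactly} for any finite $N$ --- no asymptotics, no periodization. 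This produces the homogeneous system $\mat{D}\mat{q}=\mat{0}$ with $\mat{D}\in\mathbb{C}^{(N-KM)\times(KM+1)}$, and the count $N\geq 2KM$ arises from requiring $N-KM\geq KM$ equations to determine the $KM+1$ coefficients of $\mat{q}$ up to scale; the roots of the recovered $\ply{z}$ give $\{t_k\}$, and a linear solve in \eqref{eq:smodel} gives $\{c_k\}$. To salvage your argument you would have to change the signal model (periodize it) or the kernel, at which point you would be proving a different theorem.
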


\begin{proof}
We provide a proof by construction. Let $\bar{t}_k \DE t_k/T$. Given samples $y\sqb{n} = y\rob{nT}, T>0$, by modulating both sides with $\upx{nT}$ we obtain,
\begin{equation}
\label{eq:smodel}
\underbrace{y\sqb{n} {e^{\jmath \frac{{\cot \theta }}{2}{\rob{nT}^2}}}}_{\upd{y}{n}}
 \EQc{eq:yt} \sum\limits_{k=0}^{K-1} c_k \upx{t_k} \sum\limits_{m=0}^{M-1} p_m \sinc \rob{n - \bar{t}_k - m}.
\end{equation} 
Since $\forall n \in \mathbb{Z}$, $\sin \left( {\pi \left( {n - x} \right)} \right) = {\left( { - 1} \right)^{n + 1}}\sin \left( {\pi x} \right)$ and $\sinc \left ( n-\bar{t}_k - m \right) = \frac{(-1)^{n-m+1} \sin \left (\pi \bar{t}_k \right )}{\pi \left (n - m - \bar{t}_k\right )}$, we obtain via \eqref{eq:smodel}, 
\[
\upd{y}{n} \EQc{eq:yt} 
 (-1)^{n+1}
\sum\limits_{k=0}^{K-1}
c_k \upx{t_k} 
\sum\limits_{m=0}^{M-1} p_m \frac{(-1)^{-m} \sin(\pi \bar{t}_k)}{\pi (n - m - \bar{t}_k)}.
\]
A re-arrangement of the terms leads to, 
\begin{equation}
\label{eq:z_n_definition}
\yt\sqb{n} \DE
\rob{\frac
{\pi y\sqb{n} {e^{\jmath \frac{{\cot \theta }}{2}{\rob{nT}^2}}}}
{(-1)^{n+1}} }
\EQc{eq:smodel}
\sum\limits_{k=0}^{K-1} \breve{c}^\theta_k \sum\limits_{m=0}^{M-1}  \frac{(-1)^{-m}p_m}{\rob{n - m - \bar{t}_k}},	
\end{equation}
where $\breve{c}^\theta_k \DE c_k \sin \left (\pi \bar{t}_k \right ){\exp\rob{\jmath \frac{{\cot \theta }}{2}{t_k^2}}}$. Let $\ply{z}$ be a polynomial of order $KM$ with its roots $\{\bar{t}_k\}$, of multiplicity $M$,
\begin{equation}
\label{eq:polyKM}
\ply{z} \DE 
\sum\limits_{k=0}^{\rob{KM}} q\sqb{k} z^k = \prod\limits_{k=0}^{K-1}\prod\limits_{m=0}^{M-1}\left ( z - m - \bar{t}_k\right ).
\end{equation}
With $p_m^\prime \DE (-1)^{-m}p_m$, we may now write, 
\begin{equation}
\label{eq:ann1}
\underbrace{\yt\sqb{n} \ply{n}}_{\yq\sqb{n}}
= \sum\limits_{k=0}^{K-1} \breve{c}^\theta_k \underbrace{ 
\sum\limits_{m=0}^{M-1}p_m^\prime
\frac{\prod\limits_{k=0}^{K-1}\prod\limits_{m=0}^{M-1}\left ( n - m - \bar{t}_k\right )}
{n - m - \bar{t}_k}
}_{P_{n,k}},
\end{equation}
which can be written as $\yqm = \mat{P}\cm$ in vector-matrix notation. Note that, $\yq\sqb{n}=\sum\nolimits_{k=0}^{K-1} \breve{c}^\theta_k P_{n,k}$ is a polynomial of order $\rob{KM-1}$ because $\left ( n - m - \bar{t}_k\right )$ in the denominator of \eqref{eq:ann1} peels away a monomial. Let $\rob{\Delta f}[n] = f[n+1] - f[n]$ whose recursive application results in $\Delta^L = \Delta^{L-1} \Delta$.
The implication is that 
$\Delta^{KM}\rob{\sum\nolimits_{k=0}^{K-1} \breve{c}^\theta_k P_{n,k}}\sqb{n} = 0$. Consequently, 
\begin{equation*}
    \label{eq:ann3}
    \rob{\diff{KM}\yq}\sqb{n} = \sum\limits_{k=0}^{KM} {q\sqb{k}}
    \underbrace{\diff{KM}{\left ( n^k \yt\sqb{n}  \right )}}_{D_{n,k}} = 0 \equiv \mat{Dq} = \mat{0},
\end{equation*}
where $\mat{D}$ is the matrix purely depending on data samples $y[n]$,
\[
\sqb{\mat{D}}_{n,k} = 
\diff{KM}{\left ( n^k \yt\sqb{n}  \right )} 
\EQc{eq:z_n_definition} 
\diff{KM}
\rob{n^k
\frac
{\pi{e^{\jmath \frac{{\cot \theta }}{2}{\rob{nT}^2}}}}
{(-1)^{n+1}} 
y\sqb{n} },
\]
and known \frt order $\theta$. 
Clearly, given $\mat{y} \in \mathbb{C}^N$, $\diff{KM}$ leads to a reduction of $KM$ samples and $\mat{D} \in \mathbb{C}^{\rob{N-KM}\times \rob{KM+1}}$.
Let $\ker{\rob{\mat{M}}} = \{ \mat{z} \in \mathbb{C}^N \ | \ \mat{Mz} = \mat{0} \}$ and $\sqb{\mat{q}}_k \EQc{eq:polyKM} q\sqb{k}$, $\mat{q}\in\mathbb{R}^{KM+1}$. 
For $t_k\not=t_{\ell}, \forall k \not = \ell$, a non-zero $\mat{q} \in \ker{\rob{\mat{D}}}$ can be obtained provided that $\mat{D}$ is rank-deficient or whenever $N-KM \geq KM\Rightarrow N\geq 2KM$, which gives the sampling criterion. The solution to $\mat{Dq}= \mat{0}\mapsto \mat{q}$ allows for construction of $\ply{z}$ in \eqref{eq:polyKM}; its roots then lead to the estimates of $\{t_k\}_{k=0}^{K-1}$. It remains to estimate $\{c_k\}_{k=0}^{K-1}$ which can be obtained by solving linear system of equations in in \eqref{eq:smodel}, thus leading to exact recovery of unknown $s_K\rob{t}$ in \eqref{eq:yt}.
\end{proof}

\begin{figure}[!t]
\centering
\begin{overpic}[width=0.5\linewidth]{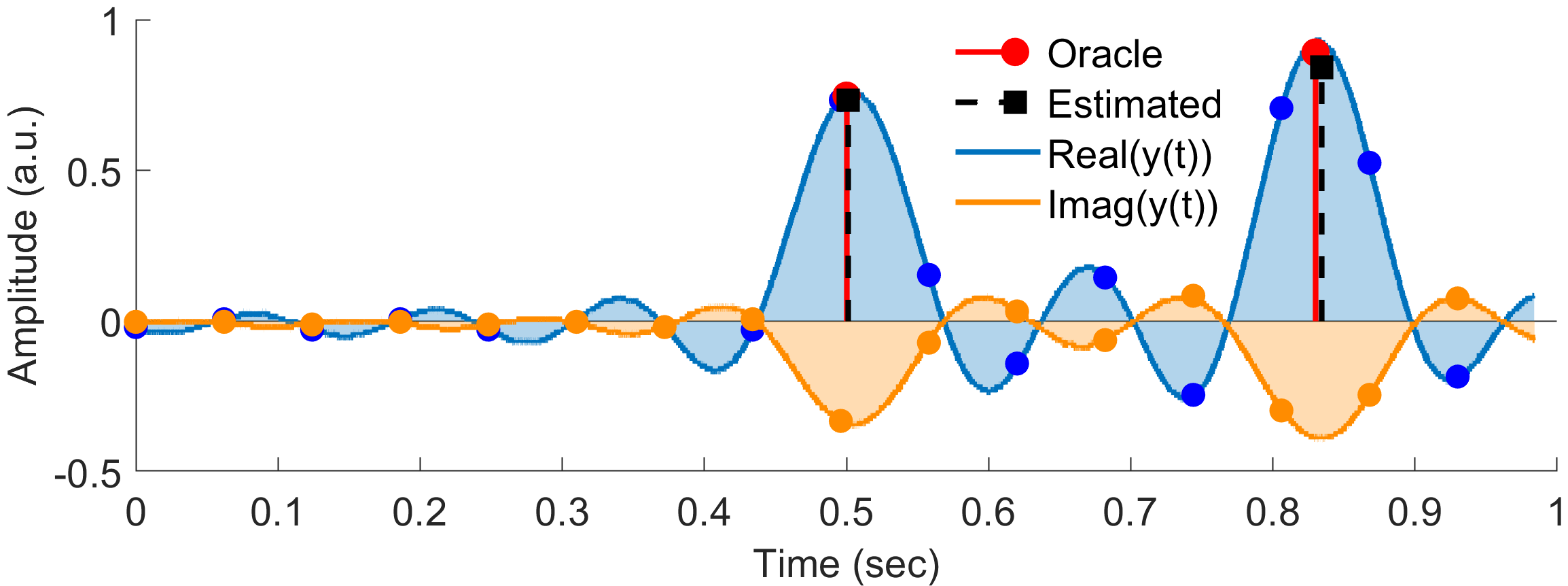}
    \put(95,335){\parbox{0.3in}{\centering \footnotesize $c_0$}}
    \put(95,300){\parbox{0.3in}{\centering \scriptsize 0.748}}
    \put(190,330){\parbox{0.3in}{\centering \footnotesize $c_1$}}
    \put(190,300){\parbox{0.3in}{\centering \scriptsize 0.891}}
    \put(285,330){\parbox{0.3in}{\centering \footnotesize $t_0 \;\text{(s)}$}}
    \put(285,300){\parbox{0.3in}{\centering \scriptsize 0.500}}
    \put(380,330){\parbox{0.3in}{\centering \footnotesize $t_1 \;\text{(s)}$}}
    \put(380,300){\parbox{0.3in}{\centering \scriptsize 0.830}}
    \put(95,250){\parbox{0.3in}{\centering \footnotesize $\tilde{c}_0$}}
    \put(95,220){\parbox{0.3in}{\centering \scriptsize 0.733}}
    \put(190,250){\parbox{0.3in}{\centering \footnotesize $\tilde{c}_1$}}
    \put(190,220){\parbox{0.3in}{\centering \scriptsize 0.843}}
    \put(285,250){\parbox{0.3in}{\centering \footnotesize $\tilde{t}_0 \;\text{(s)}$}}
    \put(285,220){\parbox{0.3in}{\centering \scriptsize 0.501}}
    \put(380,250){\parbox{0.3in}{\centering \footnotesize $\tilde{t}_1 \;\text{(s)}$}}
    \put(380,220){\parbox{0.3in}{\centering \scriptsize 0.834}}
\end{overpic}
\caption{Hardware-based experimental validation of sparse recovery method. Digital samples with $8$-bit resolution are marked by $\bullet$.}
\label{fig:hw-data}
 \end{figure}

Note that when $\theta = \pi/2$, Theorem~\ref{thm:samp-crit} reduces to the conventional Fourier domain case and $\ast_\theta, \theta = \pi/2$ reduces to the conventional notion of low-pass filtering or convolution. Furthermore, when $M = 1$, our results coincide with the Fourier domain recovery result in \cite{Vetterli:2002:J}.

\bpara{Hardware Experiment.} Although our algorithm is not designed to tackle noise, we validate its empirical robusteness via hardware experiment. To this end, we generate $y(t)$ in \eqref{eq:yt} with \frt order $\theta = \pi/4$, $\psi_M(t)$ with $M = 1$ and $T = 0.062$ s and $s_K(t)$ with $t_k = \{0.50, 0.83\}$ and $c_k =  \{0.748, 0.891\}$ using \texttt{PicoScope 2204A}. 
To capture the complex-valued signal, we utilize in-phase and quadrature channels, digitizing each channel using an \mbox{8-bit} ADC.
We use $N = 16$ samples annotated by ``$\bullet$'' (blue and orange) in \fig{fig:hw-data}.
In this setting, the measurements are naturally affected by thermal and quantization noise, among other factors. 
Despite this, our algorithm estimates the sparse signal with mean squared error (MSE) for $c_k \propto 10^{-3}$ and $t_k\propto 10^{-6}$, respectively.

\section{Cram\'{e}r--Rao Bounds for Sparse Sampling}
To benchmark the recovery performance in the presence of noise, we derive CRB for the sparse parameter estimation problem. To this end, we define,
\begin{equation}
\yn\sqb{n} = y\sqb{n} + \awgn\sqb{n}, \quad 
\awgn \sim \mathsf{CN}\rob{0,\sigma^2},
\end{equation}
where $\mathsf{CN}$ denotes the white complex Gaussian noise distribution. The CRB is obtained by inverting the Fisher Information Matrix (FIM) parametrized by $\boldsymbol{\Theta}  = \sqb{t_0,\cdots,t_{K-1} \ | \ c_0,\cdots,c_{K-1}  }^{\top}$. The FIM is given by\cite{Porat:1986:J}
\begin{equation}
    \JT = \cjg{\mat{G}}\mat{R}^{-1}\mat{G} = \frac{1}{\sigma^2} \cjg{\mat{G}} \mat{G},
    \label{eq:covariance-matrix}
\end{equation}
where $\mat{G} = \parder{\boldsymbol{\Theta}}{\yn\sqb{n}}$ is a matrix of partial derivatives of $\yn\sqb{n}$ and $\mat{R}$ is the covariance matrix of the $\mathsf{CN}$. CRB matrix is obtained by inverting $\JT$, yielding $\cov \left ( \theta \right ) \geq \left ( \JT \right ) ^ {-1}$ where, 
\begin{align}
    \JT = \frac{1}{\sigma^2} \begin{bmatrix}
        \cjg{\left( \parder{t_k}{\yn\sqb{n}} \right )} \parder{t_m}{\yn\sqb{n}} & 
        \cjg{ \left( \parder{t_k}{\yn\sqb{n}} \right )} \parder{c_m}{\yn\sqb{n}} \\
        \cjg{\left( \parder{c_k}{\yn\sqb{n}} \right )}\parder{t_m}{\yn\sqb{n}} & \cjg{\left ( \parder{c_k}{\yn\sqb{n}} \right )} \parder{c_m}{\yn\sqb{n}} \\
    \end{bmatrix},
    \label{eq:fim-ders}
\end{align}
with $k, m \in \{ 0, 1, \hdots, K - 1\}$. With $\phi\rob{t} = \dno{\psi}(t)$, we can simplify the matrix elements in the above as follows:
\begin{align}
\frac{{\partial z}\sqb{n}}{{\partial {c_k}}}   & =\psi \left( {n - \frac{{{t_k}}}{T}} \right){e^{ - \jmath \frac{{\cot (\theta )}}{2}\left( {{{\left( {nT} \right)}^2} - t_k^2} \right)}} 
\label{eq:zct}\\
\frac{{\partial z}\sqb{n}}{{\partial {t_k}}}  & 
={c_k}\left( { - \frac{{{\psi ^{\left( 1 \right)}}\left( {n - {t_k}/T} \right)}}{{T\psi \left( {n - {t_k}/T} \right)}} + \jmath {t_0}\cot (\theta )} \right)
\rob{\frac{{\partial z \sqb{n}}}{{\partial {c_k}}}} \notag ,
\end{align}
where ${{\psi ^{\left( 1 \right)}}}\rob{t} \DE \partial_t \psi\rob{t}$. For the general case of arbitrary $K$, the analytical results are intractable. Hence, we derive the closed-form CRB for a single spike case.

\bpara{Analytical Bounds: Single Spike Case.} When $K = 1$, the FIM in \eqref{eq:fim-ders}
simplifies to, $\JT \EQc{eq:fim-ders} \frac{1}{\sigma^2} \begin{bsmallmatrix}
    \JTK{1} & \JTK{2} \\\JTK{3} & \JTK{4}
    \end{bsmallmatrix}$.
Next, with $\boldsymbol{\Theta} = \sqb{t_0 \ | \ c_0}^{\top}$, we investigate each matrix element separately. Using \eqref{eq:zct}, we simplify  the expressions for $\{\mathsf{J}_m^{\boldsymbol{\Theta}}\}_{m=1}^{m=4}$ read, 
\begin{align*}
 & \left\{{\begin{array}{*{20}{l}}
 \JTK{1} = \frac{{c_0^2}}{{{T^2}}}\left( {{T^2}{{\cot }^2}\left( \theta  \right)t_0^2\Phi _{\psi ,\psi }^N\left( \bar{t}_0 \right) + \Phi _{{\psi ^{\left( 1 \right)}},{\psi ^{\left( 1 \right)}}}^N\left( \bar{t}_0 \right)} \right) \\   
\JTK{2} = - \frac{c_0}{T} \left( \Phi_{\psi,\psi^{(1)}}^N\left( \bar{t}_0 \right) + \jmath T\cot \left( \theta  \right){t_0}\Phi_{\psi,\psi}^N \left ( \bar{t}_0 \right ) \right)\\
\JTK{3} = \frac{c_0}{T} \left(\jmath T\cot \left( \theta \right){t_0}\Phi_{\psi,\psi}^N \left ( \bar{t}_0 \right ) - \Phi_{\psi,\psi^{(1)}}^N \left ( \bar{t}_0 \right ) \right)  \\
 \JTK{4} = \Phi_{\psi,\psi}^N \left ( \bar{t}_0 \right ) 
\end{array}} \right.
 \mbox{ where } \
        \gra_{\psi_1,\psi_2}^N \left ( t \right ) = \sum\limits_{n = 0}^{N-1} \psi_1 \left ( n - t \right ) \psi_2 \left ( n - t \right ).
\label{J:gra} 
\tag{15}     
\setcounter{equation}{15}
\end{align*}
This leads to a method for the numerical computation for the CRB with arbitrary $\psi$. Note that $\{\mathsf{J}_m^{\boldsymbol{\Theta}}\}_{m=1}^{m=4}$ implicitly depend on $\gra_{\psi_1,\psi_2}^\infty$ in \eqref{J:gra}. With regards to Theorem~\ref{thm:samp-crit}, we can compute analytical bounds by setting $\psi = \sinc$ in the asymptotic case, $N\to \infty$. Hence, we compute the different parametrizations of $\gra_{\psi_1,\psi_2}^\infty$ to obtain the FIM in \eqref{eq:covariance-matrix}. To this end, we define, 
\begin{equation}
\label{eq:sm}
\Sm{m}  \DE \sum\limits_{n \in \mathbb{Z}} {\frac{1}{{{{\left( {x/T - n} \right)}^m}}}}  = \sum\limits_{n \in \mathbb{Z}} {\frac{1}{{{{\left( {\overline x  - n} \right)}^m}}}},
\end{equation}
which will be a common factor to all of our computations. 

\begin{description}[leftmargin = 0pt,itemsep = 10pt,labelwidth=!,labelsep = !]
\item{\bf Computation of ${\Phi _{\psi ,\psi }^\infty \left( \bar{t}_0 \right)} $.}  Let $x = \bar{t}_0 \in \mathbb{R}$ where $\bar{t}_0 = t_0/T$. Since $\forall n \iZ, \sin \left( {\pi \left( {\bar{x} - n} \right)} \right) = {\left( { - 1} \right)^n}\sin \left( {\pi \bar{x}} \right)$ is the constituent term in ${\Phi _{\psi ,\psi }^\infty \left( \bar{x} \right)}$, we can write,  
\[
{\left. {\Phi _{\psi ,\psi }^\infty \left( \bar{x} \right)} \right|_{x = {t_0}}} = \frac{{{{\sin }^2}\left( {\pi \bar{x}} \right)}}{{{\pi ^2}}}
\underbrace{\frac{1}{{\sum\nolimits_{n \in \mathbb{Z}} {{{\left( {\bar{x} - n} \right)}^2}} }}}_{\Sm{2}}
= 1\to \boxed{ \JTK{4}}
\]
because $\Sm{2}= {\pi ^2}{\csc ^2}\left( {\pi x} \right)$, which is a well-known identity in complex analysis (see \cite{Ahlfors:1979:B}, page 189).

\item{\bf Computation of ${\Phi _{\zeta, \zeta }^\infty \left( \bar{x} \right)}$ where $\zeta = \psi^{\rob{1}}$.} We begin by noting,  $\zeta \left( {\bar x} \right) = 
\pi \left( {\frac{{\cos \left( {\pi \bar x} \right)}}{{\pi \bar x}} - \frac{{\sin \left( {\pi \bar x} \right)}}{{{\pi ^2}{{\bar x}^2}}}} \right).$
The constituent term, 
\begin{align*}
\underbrace{\rob{{\zeta}\left( {\bar x - n} \right)}^2}_{\zeta =\psi^{\rob{1}}}
= &  \frac{{{{\sin }^2}\left( {\pi \left( {\bar x - n} \right)} \right)}}{{{\pi ^2}{{\left( {\bar x - n} \right)}^4}}} + \frac{{{{\cos }^2}\left( {\pi \left( {\bar x - n} \right)} \right)}}{{{{\left( {\bar x - n} \right)}^2}}} \hfill  - \frac{{2\sin \left( {\pi \left( {\bar x - n} \right)} \right)\cos \left( {\pi \left( {\bar x - n} \right)} \right)}}{{\pi {{\left( {\bar x - n} \right)}^3}}},
\end{align*}
can be simplified via basic trigonometric identities since, 
${\sin ^2}\left( {\pi \left( {\bar x - n} \right)} \right) = {\sin ^2}\left( {\pi \bar x} \right)$,
${\cos ^2}\left( {\pi \left( {\bar x - n} \right)} \right) = {\cos ^2}\left( {\pi \bar x} \right)$, and
$2\sin \left( {\pi \left( {\bar x - n} \right)} \right)\cos \left( {\pi \left( {\bar x - n} \right)} \right) =  \sin \left( {2\pi \bar x} \right)$, yielding,
\begin{equation}
\label{eq:pxixi}
\Phi _{\zeta,\zeta}^\infty \left( {\bar x} \right) = \sum\limits_{n \in \mathbb{Z}} {\left( {\frac{{{{\sin }^2}\left( {\pi \bar x} \right)}}{{{\pi ^2}{{\left( {\bar x - n} \right)}^4}}} + \frac{{{{\cos }^2}\left( {\pi \bar x} \right)}}{{{{\left( {\bar x - n} \right)}^2}}} - \frac{{\sin \left( {2\pi \bar x} \right)}}{{\pi {{\left( {\bar x - n} \right)}^3}}}} \right)}.
\end{equation}
In the above, note that the terms corresponding to $\{ \sin^2\rob{\pi\bar{x}}, \cos^2\rob{\pi\bar{x}}, \sin\rob{2\pi\bar{x}}\}$ are linked with the sums $\{ \Sm{4},\Sm{2}, \Sm{3} \}$ in \eqref{eq:sm}, respectively. Known that, $\Sm{2} = {\pi ^2}{\csc ^2}\left( {\pi \bar x} \right)$ (see \cite{Ahlfors:1979:B}, page 189). To compute $\Sm{3}$ and $\Sm{4}$, we proceed by differentiating $\Sm{2}$,
\begin{align*}
& \underbrace{{\partial _x \Sm{2}}}_{\Smd{2}{1}}  =  
- \frac{2}{T}
\underbrace{\sum\limits_{n \in \mathbb{Z}} {\frac{1}{{{{\left( {\bar x - n} \right)}^3}}}}}_{\Sm{3}} = 
\underbrace { - \frac{{2{\pi ^3}}}{T}\cot \left( {\pi \bar x} \right){{\csc }^2}\left( {\pi \bar x} \right)}_{{\partial _x}\left( {{\pi ^2}{{\csc }^2}\left( {\pi \overline x } \right)} \right)} \\
& \Rightarrow \Sm{3} = {\pi ^3}\cot \left( {\pi \bar x} \right){\csc ^2}\left( {\pi \bar x} \right).
\end{align*}
Similarly, 
\begin{align*}
& \underbrace{{\partial ^2 _x \Sm{2}}}_{\Smd{2}{2}}  =  
\rob{\frac{6}{T^2}}
{\Sm{4}} = 
\underbrace {
\frac{{2{\pi ^4}}}{{{T^2}}}\left( {\cos \left( {2\pi \overline x } \right) + 2} \right){\csc ^4}\left( {\pi \overline x } \right)
}_{{\partial^2 _x}\left( {{\pi ^2}{{\csc }^2}\left( {\pi \overline x } \right)} \right)} \\
& \Rightarrow \Sm{4} = 
\tfrac{\pi ^4}{3}\left( {\cos \left( {2\pi \overline x } \right) + 2} \right){\csc ^4}\left( {\pi \overline x } \right).
\end{align*}
Given, $\Phi _{\zeta,\zeta}^\infty \left( {\bar x} \right) = 
\rob{{\sin}\left( {\pi \bar x} \right)/\pi}^2\Sm{4} + {\cos ^2}\left( {\pi \bar x} \right)\Sm{2} - 
\rob{
\sin \left( {2\pi \bar x} \right)/\pi
}
\Sm{3}$ (see \eqref{eq:pxixi}), substituting for $\Sm{m}, m = \{2,3,4\}$ in \eqref{eq:pxixi} with $\bar{x} = x/T$, we obtain, 
\begin{equation}
\label{eq:pxiD}
    \boxed{\Phi _{{\psi ^{\left( 1 \right)}},{\psi ^{\left( 1 \right)}}}^\infty \left( {\bar x} \right)} = \sum\limits_{n \in \mathbb{Z}} \left ({{\psi ^{\left( 1 \right)}}\left( {\frac{x}{T} - n} \right)} \right )^2  = \frac{{{\pi ^2}}}{{3}}.
\end{equation}

\item{\bf Computation of ${\Phi _{\zeta ,\psi }^\infty \left( \bar{x} \right)}$ where $\zeta = \psi^{\rob{1}}$.} This term turns out to be zero. 
To see this, we expand the constituent term in $\Phi _{{\psi ^{\left( 1 \right)}},\psi }^\infty \left( {\bar x} \right)$, that is, $\psi \left( {\bar x - n} \right){\psi ^{\left( 1 \right)}}\left( {\bar x - n} \right) = \nicefrac{{\sin (\pi (\bar x - n))\cos (\pi (\bar x - n))}}{{\pi {{(\bar x - n)}^2}}} - \nicefrac{{{{\sin }^2}(\pi (\bar x - n))}}{{{\pi ^2}{{(\bar x - n)}^3}}}$. Using $\sin (\pi (\bar x - n))\cos (\pi (\bar x - n)) = \sin (2\pi \bar x) / 2$, we conclude,
\begin{align*}
 \boxed{ \Phi _{{\psi ^{\left( 1 \right)}},\psi }^\infty \left( {\bar x} \right)}
&    = \sum\limits_{n \in \mathbb{Z}} {\frac{{\sin (2\pi \bar x)}}{{2\pi {{(n - \bar x)}^2}}}}  - \frac{{{{\sin }^2}(\pi \bar x)}}{{{\pi ^2}{{(n - \bar x)}^3}}}  = \underbrace{\frac{{\sin (2\pi \bar x)}}{{2\pi }}\Sm{2}}_{\pi\cot\rob{\pi\bar{x}}} - 
\underbrace{\frac{{{{\sin }^2}(\pi \bar x)}}{{{\pi ^2}}}\Sm{3}}_{\pi\cot\rob{\pi\bar{x}}} = 0.
\end{align*} 
\end{description}
In summary, our computations show, 
\[\begin{array}{*{20}{c}}
{\Phi _{\psi ,\psi }^\infty \left( \bar{t}_0 \right)} = 1
&
{\Phi _{\psi^{(1)} , \psi^{(1)}}^\infty \left( \bar{t}_0 \right)} = \frac{\pi^2}{3}
&
\Phi _{{\psi ^{\left( 1 \right)}},\psi }^\infty \left( {\bar{t}_0 } \right) = 0.
\end{array}\]

\bpara{CRB Computation.} Since we have all the values of $\gra_{\psi_1,\psi_2}^\infty$ in \eqref{J:gra} that define $\{\mathsf{J}_m^{\boldsymbol{\Theta}}\}_{m=1}^{m=4}$, we can now obtain the FIM by back substituting the obtained values of $\gra_{\psi_1,\psi_2}^\infty$
yielding the FIM elements,
(i)~$\JTK{1}  = c_0^2\left( {{{\cot }^2}\left( \theta  \right)t_0^2 + \frac{{{\pi ^2}}}{{3{T^2}}}} \right)$,
(ii)~$\JTK{2}  = -\jmath c_0 \cot \left( \theta \right){t_0}$,
(iii)~$\JTK{3} = \jmath c_0 \cot \left( \theta \right){t_0}$, and,
(iv)~$\JTK{4} = 1$. The corresponding FIM takes the form of,
\begin{align*}
    \JT = \frac{1}{\sigma^2} \begin{bmatrix}
        c_0^2\left( {{{\cot }^2}\left( \theta  \right)t_0^2 + \tfrac{{{\pi ^2}}}{{3{T^2}}}} \right) & -\jmath c_0 \cot \left( \theta \right){t_0} \\
        \jmath c_0 \cot \left( \theta \right){t_0} & 1 \\
    \end{bmatrix},
\end{align*}
with ${\operatorname{det}\rob{\JT}}=
\left( {{\pi ^2}c_0^2} \right)/\left( {3{T^2}} \right)$. The CRB is given by,
\begin{equation}
    \covT = \frac{\sigma^2}{\operatorname{det}\rob{\JT}} \begin{bmatrix}
        \JTK{4} & -\JTK{2} \\ -\JTK{3} & \JTK{1}
    \end{bmatrix}
    \label{eq:cov-def}
\end{equation}
and using $\covT$ and defining $\snr =c_0^2/\sigma^2$, we obtain,
\begin{align*}
    \covT &= \rob{\frac{3{T^2}}{\pi^2 \snr} }\begin{bmatrix}
        1 & \jmath c_0 \cot \left( \theta \right){t_0} \\
        -\jmath c_0 \cot \left( \theta \right){t_0} & c_0^2\left( {{{\cot }^2}\left( \theta  \right)t_0^2 + \tfrac{{{\pi ^2}}}{{3{T^2}}}} \right)
    \end{bmatrix},
\end{align*}
giving,
\begin{empheq}[left = \empheqlbrace]{align*}
& \var\rob{t_0} \geq
\frac{3T^2}{\pi^2 \snr}\\[0.5em]
& \var\rob{c_0} \geq
\frac{3c_0^2T^2}{\pi^2 \snr} \rob { \rob{t_0 \cot\rob{\theta}}^2 + \frac{{{\pi ^2}}}{{3{T^2}}}}.
\end{empheq}

\section{Conclusion}
This paper introduces a new time-domain method for recovery of sparse signals from low-pass filtered measurements in the Fractional Fourier Transform (\frt) domain. The advantage of this method is that unlike previous methods that work in the \frt domain and are susceptible to spectral leakage, the new method overcomes restrictions. We also derive Cram\'{e}r--Rao Bounds (CRB) for the sparse estimation problem that was missing in previous literature. Future work in this area includes recovery with different classes of sampling kernels, particularly non-bandlimited ones and their performance evaluation in terms of Cram\'{e}r--Rao Bounds.

\ifCLASSOPTIONcaptionsoff
\newpage
\fi



\begin{thebibliography}{10}
\providecommand{\url}[1]{#1}
\csname url@samestyle\endcsname
\providecommand{\newblock}{\relax}
\providecommand{\bibinfo}[2]{#2}
\providecommand{\BIBentrySTDinterwordspacing}{\spaceskip=0pt\relax}
\providecommand{\BIBentryALTinterwordstretchfactor}{4}
\providecommand{\BIBentryALTinterwordspacing}{\spaceskip=\fontdimen2\font plus
\BIBentryALTinterwordstretchfactor\fontdimen3\font minus
  \fontdimen4\font\relax}
\providecommand{\BIBforeignlanguage}[2]{{%
\expandafter\ifx\csname l@#1\endcsname\relax
\typeout{** WARNING: IEEEtran.bst: No hyphenation pattern has been}%
\typeout{** loaded for the language `#1'. Using the pattern for}%
\typeout{** the default language instead.}%
\else
\language=\csname l@#1\endcsname
\fi
#2}}
\providecommand{\BIBdecl}{\relax}
\BIBdecl

\bibitem{Wiener:1929:J}
N.~Wiener, ``Hermitian polynomials and {F}ourier analysis,'' \emph{MIT J. Math.
  Phys.}, vol.~18, pp. 70--73, 1929.

\bibitem{Condon:1937:J}
E.~U. Condon, ``Immersion of the {F}ourier transform in a continuous group of
  functional transformations,'' \emph{Proc. Natl. Acad. Sci. U.S.A.}, vol.~23,
  no.~3, p. 158, 1937.

\bibitem{Ozaktas:2001:Book}
H.~M. Ozaktas, Z.~Zalevsky, and M.~A. Kutay, \emph{The Fractional {F}ourier
  Transform}, ser. Series in Pure and Applied Optics.\hskip 1em plus 0.5em
  minus 0.4em\relax John Wiley \& Sons, 2001.

\bibitem{Martone:2001:J}
M.~Martone, ``A multicarrier system based on the fractional {F}ourier transform
  for time-frequency-selective channels,'' \emph{{IEEE} Trans. Commun.},
  vol.~49, no.~6, pp. 1011--1020, 2001.

\bibitem{Yetik:2003:J}
S.~I. Yetik and A.~Nehorai, ``Beamforming using the fractional {F}ourier
  transform,'' \emph{{IEEE} Trans. Signal Process.}, vol.~51, no.~6, pp.
  1663--1668, 2003.

\bibitem{Namias:1961:J}
V.~Namias, ``The fractional order {F}ourier transform and its application to
  quantum mechanics,'' \emph{J. Inst. Math. Appl.}, vol.~25, p. 241, 1961.

\bibitem{Zayed:2002:J}
A.~I. Zayed, ``A class of fractional integral transform: A generalization of
  the fractional {F}ourier transform,'' \emph{{IEEE} Trans. Signal Process.},
  vol.~50, pp. 619--627, Mar. 2002.

\bibitem{Zayed:1996:J}
------, ``On the relationship between the {F}ourier and fractional {F}ourier
  transforms,'' \emph{{IEEE} Signal Process. Lett.}, vol.~3, no.~12, pp.
  310--311, Dec 1996.

\bibitem{Almeida:1994:J}
L.~B. Almeida, ``The fractional {F}ourier transform and time-frequency
  representations,'' \emph{{IEEE} Trans. Signal Process.}, vol.~42, no.~11, pp.
  3084--3091, Nov 1994.

\bibitem{Pei:2007:J}
S.-C. Pei and J.-J. Ding, ``Relations between {Gabor} transforms and fractional
  {Fourier} transforms and their applications for signal processing,''
  \emph{{IEEE} Trans. Signal Process.}, vol.~55, no.~10, pp. 4839--4850, oct
  2007.

\bibitem{Amein:2007:C}
A.~S. Amein and J.~J. Soraghan, ``The fractional {F}ourier transform and its
  application to high resolution {SAR} imaging,'' in \emph{Proc. IEEE Int.
  Geoscience and Remote Sensing Symp.}, 2007, pp. 5174--5177.

\bibitem{Xia:1996:J}
X.~G. Xia, ``On bandlimited signals with fractional {F}ourier transform,''
  \emph{{IEEE} Signal Process. Lett.}, vol.~3, no.~3, pp. 72--74, Mar 1996.

\bibitem{Torres:2006:J}
R.~Torres, P.~F. Pellat, and Y.~Torres, ``Sampling theorem for fractional
  bandlimited signals: A self-contained proof. application to digital
  holography,'' \emph{{IEEE} Signal Process. Lett.}, vol.~13, no.~11, pp.
  676--679, Nov. 2006.

\bibitem{Tao:2008:Ja}
R.~Tao, B.~Deng, Z.-Q. Wei, and Y.~Wang, ``Sampling and sampling rate
  conversion of band limited signals in the fractional {F}ourier transform
  domain,'' \emph{{IEEE} Trans. Signal Process.}, vol.~56, no.~1, pp. 158--171,
  Jan. 2008.

\bibitem{Candan:2003:J}
C.~Candan and H.~M. Ozaktas, ``Sampling and series expansion theorems for
  fractional {F}ourier and other transforms,'' \emph{Signal Process.}, vol.~83,
  no.~11, pp. 2455--2457, 2003.

\bibitem{Garcia:2000:J}
A.~G. Garcia, ``Orthogonal sampling formulas: A unified approach,'' \emph{SIAM
  Rev.}, vol.~42, no.~3, pp. 499--512, jan 2000.

\bibitem{Wei:2010:J}
D.~Wei, Q.~Ran, and Y.~Li, ``Generalized sampling expansion for bandlimited
  signals associated with the fractional {F}ourier transform,'' \emph{{IEEE}
  Signal Process. Lett.}, vol.~17, no.~6, pp. 595--598, 2010.

\bibitem{Erseghe:1999:J}
T.~Erseghe, P.~Kraniauskas, and G.~Carioraro, ``Unified fractional {F}ourier
  transform and sampling theorem,'' \emph{{IEEE} Trans. Signal Process.},
  vol.~47, no.~12, pp. 3419--3423, Dec 1999.

\bibitem{Bhandari:2012:J}
A.~Bhandari and A.~I. Zayed, ``Shift-invariant and sampling spaces associated
  with the fractional {F}ourier transform domain,'' \emph{{IEEE} Trans. Signal
  Process.}, vol.~60, no.~4, pp. 1627--1637, Apr. 2012.

\bibitem{Bhandari:2019:J}
------, ``Shift-invariant and sampling spaces associated with the special
  affine {Fourier} transform,'' \emph{Appl. Comput. Harmon. Anal.}, vol.~47,
  no.~1, pp. 30--52, Jul. 2019.

\bibitem{Bhandari:2010:J}
A.~Bhandari and P.~Marziliano, ``Sampling and reconstruction of sparse signals
  in fractional {F}ourier domain,'' \emph{{IEEE} Signal Process. Lett.},
  vol.~17, no.~3, pp. 221--224, 2010.

\bibitem{Bhandari:2015:C}
A.~Bhandari, Y.~C. Eldar, and R.~Raskar, ``Super-resolution in phase space,''
  in \emph{IEEE Intl. Conf. on Acoustics, Speech and Sig. Proc.}\hskip 1em plus
  0.5em minus 0.4em\relax IEEE, Apr. 2015, pp. 4155--4159.

\bibitem{Bhandari:2019:Ja}
A.~Bhandari and Y.~C. Eldar, ``Sampling and super resolution of sparse signals
  beyond the {Fourier} domain,'' \emph{{IEEE} Trans. Signal Process.}, vol.~67,
  no.~6, pp. 1508--1521, Mar. 2019.

\bibitem{Zhang:2023:C}
P.~Zhang and A.~Bhandari, ``Unlimited sampling in phase space,'' in \emph{Proc.
  IEEE Intl. Conf. on Acoustics, Speech and Signal Proc. (ICASSP)}.\hskip 1em
  plus 0.5em minus 0.4em\relax IEEE, Jun. 2023.

\bibitem{Bhandari:2020:J}
A.~Bhandari, F.~Krahmer, and R.~Raskar, ``On unlimited sampling and
  reconstruction,'' \emph{{IEEE} Trans. Signal Process.}, vol.~69, pp.
  3827--3839, Dec. 2020.

\bibitem{Bhandari:2017:C}
\BIBentryALTinterwordspacing
------, ``On unlimited sampling,'' in \emph{International Conference on
  Sampling Theory and Applications ({SampTA})}, Jul. 2017.
\BIBentrySTDinterwordspacing

\bibitem{Pei:1999:Ja}
S.-C. Pei, M.-H. Yeh, and C.-C. Tseng, ``Discrete fractional {Fourier}
  transform based on orthogonal projections,'' \emph{IEEE Trans. Signal
  Process.}, vol.~47, no.~5, pp. 1335--1348, May 1999.

\bibitem{Candan:2000:J}
C.~Candan, M.~Kutay, and H.~Ozaktas, ``The discrete fractional {Fourier}
  transform,'' \emph{IEEE Trans. Signal Process.}, vol.~48, no.~5, pp.
  1329--1337, May 2000.

\bibitem{Liu:2014:J}
S.~Liu, T.~Shan, R.~Tao, Y.~D. Zhang, G.~Zhang, F.~Zhang, and Y.~Wang, ``Sparse
  discrete fractional {Fourier} transform and its applications,'' \emph{IEEE
  Trans. Signal Process.}, vol.~62, no.~24, pp. 6582--6595, Dec. 2014.

\bibitem{DeFigueiredo:1982:J}
R.~J.~P. De~Figueiredo and C.-L. Hu, ``Waveform feature extraction based on
  {Tauberian} approximation,'' \emph{IEEE Trans. Pattern Anal. Mach. Intell.},
  no.~2, pp. 105--116, Mar. 1982.

\bibitem{Vetterli:2002:J}
M.~Vetterli, P.~Marziliano, and T.~Blu, ``Sampling signals with finite rate of
  innovation,'' \emph{IEEE Trans. Signal Process.}, vol.~50, no.~6, pp.
  1417--1428, Jun. 2002.

\bibitem{Zayed:1998:J}
A.~I. Zayed, ``A convolution and product theorem for the fractional {F}ourier
  transform,'' \emph{{IEEE} Trans. Signal Process.}, vol.~5, no.~4, pp.
  101--103, Apr 1998.

\bibitem{Porat:1986:J}
B.~Porat and B.~Friedlander, ``Computation of the exact information matrix of
  gaussian time series with stationary random components,'' \emph{IEEE
  Transactions on Acoustics, Speech, and Signal Processing}, vol.~34, no.~1,
  pp. 118--130, 1986.

\bibitem{Ahlfors:1979:B}
\BIBentryALTinterwordspacing
L.~Ahlfors, \emph{Complex Analysis: An Introduction to The Theory of Analytic
  Functions of One Complex Variable}.\hskip 1em plus 0.5em minus 0.4em\relax
  McGraw-Hill Education, 1979.
\BIBentrySTDinterwordspacing

\end{thebibliography}
\end{document}